\documentclass[letterpaper, 10pt, conference]{ieeeconf}
\IEEEoverridecommandlockouts                              

\usepackage{amssymb}
\usepackage{graphicx}
\usepackage{enumerate}
\usepackage{amsmath}
\usepackage{amsfonts}
\usepackage{algorithm}
\usepackage{algpseudocode}
\usepackage{url}
\usepackage{epstopdf}
\usepackage{multirow}
\usepackage{authblk}
\usepackage{xcolor}
\usepackage{epstopdf}
\usepackage{tabularx}
\usepackage{stmaryrd}
\usepackage{setspace}

\newtheorem{theorem}{Theorem}[section]
\newtheorem{example}[theorem]{Example}

\newtheorem{proposition}[theorem]{Proposition}
\newtheorem{definition}[theorem]{Definition}

\newcommand{\eps}{\varepsilon}

\graphicspath{{./fig/}}

\title{Supervisory Control with Event Forcing Under Partial Observation}

\author{Yin Tong$^{1}$ and Kai Cai$^{2}$
\thanks{This work was supported by the National Natural Science Foundation of China under Grant No. 62373313, by the Fundamental Research Funds for the Central Universities under Grant No. 2682025GH015, and by JST ASPIRE Grant no. JPMJAP2519.}
\thanks{$^1$Y. Tong (Corresponding Author) is with the School of Information Science and Technology, Southwest Jiaotong University, Chengdu 611756, China {\tt\small yintong@swjtu.edu.cn}.}%
\thanks{$^2$K. Cai is with the Department of Core Informatics, Osaka Metropolitan University, Osaka 558-8585, Japan {\tt\small cai@omu.ac.jp}.}
}

\begin{document}
\maketitle
\pagestyle{empty}

\begin{abstract}
In this paper, we address supervisory control in discrete-event systems with event forcing. In particular, partial observation of the supervisor is considered. Unlike traditional supervisory control, which relies on event enablement and disablement, the forcing mechanism allows a supervisor to preempt unwanted transitions by actively triggering forcible events. Under partial observation, we face a new challenge where the supervisor's forcing decisions must be consistent across all indistinguishable strings meanwhile guarantee all violating transitions are excluded through preemption. We introduce the concept of \emph{forcing consistency} as a necessary and sufficient condition for the existence of a supervisor that achieves a given specification. We prove that forcing consistency is strictly stronger than the notion of forcibility but, unlike the latter, is not closed under union. These results lay the theoretical foundation for supervisor synthesis in the presence of both forcing mechanisms and observation limitations.
\end{abstract}

\section{Introduction}\label{sec:intro}
Supervisory control theory (SCT) \cite{wonham2019supervisory}, introduced by Ramadge and Wonham in the 1980s, is a formal framework for the design of discrete-event systems (DES). In this paradigm, a supervisor controls a plant by enabling or disabling controllable events. The traditional SCT follows a simple rule: the supervisor decides which events are allowed, but the plant remains ``spontaneous'' in choosing which of those enabled events will actually occur. This approach has proven effective for a wide range of applications, including manufacturing systems, communication protocols, traffic control systems, and automated production lines.

Despite its theoretical elegance and practical success, traditional SCT still has some limitations in real-world control scenarios. The assumption that the plant makes the final choice does not always match how controllers work in practice. For instance, when a supervisor increases the power of an engine \cite{malikopoulos2014supervisory}, changes a traffic light from red to green \cite{du2021coupled}, or synthesize proteins from a gene \cite{baldissera2015supervisory}, these are mandatory commands that must be executed in a timely fashion, rather than optional permissions. Furthermore, Reijnen et al. \cite{reijnen2019finite,reijnen2022supervisory} showed that treating a controller as a passive filter can reduce system permissiveness. This mismatch between theory and practice motivates researchers to enrich the interaction mechanism.

The concept of event forcing has been introduced as an extension of the supervisory control paradigm. It allows the supervisor to mandate the execution of certain events, termed \emph{forcible events}. Unlike controllable events, which are merely enabled or disabled, forcible events are active commands and their execution preempts other transitions, including uncontrollable ones. This preemption mechanism is particularly valuable in safety-critical applications where timely intervention is essential. For example, in an emergency shutdown scenario, the supervisor must be able to force a halt operation that preempts normal operational events, ensuring that the system transitions immediately to a safe state regardless of other pending activities. Most research used event forcing mainly in timed DES to preempt the progress of time (e.g., \cite{brandin2002supervisory,zhang2013supervision,takai2006new,rashidinejad2020non,rashidinejad2023supervisory}), or in hybrid systems (e.g., \cite{ushio2005control}). 

In untimed DES, event forcing has received less attention. Initially, it was explored in \cite{golaszewski1987control}, which introduced a restricted controllability condition allowing for at most a single forced event. This was later extended in \cite{weidemann2012optimal} to accommodate multiple forced events through the use of control patterns. While some studies explored related concepts like driven events, enforceable events, or prioritized composition, they often lacked systematic synthesis methods \cite{golaszewski1987control,heymann2002concurrency} or suffered from computational complexity issues \cite{weidemann2012optimal,diekmann2013event}. Recently, Reniers and Cai \cite{reniers2024supervisory} formalized a comprehensive untimed SCT framework for event forcing. They introduced the notion of forcible-controllability, a property that is closed under union. Their work showed that this property provides a necessary and sufficient condition for a unique, maximally permissive supervisor that has the maximum flexibility in selecting  forcible events for preempting. They also provided a polynomial-time supervisor synthesis algorithm.

Current works on event forcing assume that a supervisor can observe the occurrence of all events. In practice, however, system dynamics may not be perfectly captured due to sensor limitations. In such scenarios, certain events become unobservable, leaving the supervisor with only partial observation. This setting is challenging because the supervisor must make identical forcing decisions for different strings that look the same.

In this paper, we extend the event-forcing framework of \cite{reniers2024supervisory} to the partial observation setting. In particular, we focus exclusively on event forcing to avoid the additional complication that arises when multiple control mechanisms overlap under partial observation. Our main contributions are:
\begin{itemize}
    \item A partial-observation supervisor and the corresponding closed-loop language are defined. Specifically, a control framework for event forcing is established where the supervisor's decisions are based solely on the observed event sequences.
    \item The notion of \emph{forcing consistency} is introduced, which serves as a counterpart of observability with event forcing. This property requires  uniform forcing decisions and preemptions across all indistinguishable strings. We prove that this property provides a necessary and sufficient condition for supervisor existence.
    \item The properties of forcing consistency are studied. We demonstrate that forcing consistency is not closed under union or under intersection, which implies that under partial observation, the unique maximally permissive supervisor generally does not exist.
\end{itemize}

The remainder of this paper is organized as follows. Section~\ref{sec:pre} recalls the basics of DES and traditional supervisory control. Section~\ref{sec:forcibility} introduces the event forcing mechanism, followed by the formulation of the partial observation framework and the corresponding definition of forcing consistency. The supervisor existence theorem and the properties of forcing consistency are presented in Section~\ref{sec:property}. Finally, Section~\ref{sec:conclusion} concludes the paper.
\section{Preliminaries}\label{sec:pre}

In this section, we recall some basics on automata and supervisory control theory. For further details, the reader is referred to \cite{cassandras2021introduction}.

\subsection{Automaton}
The plant is modeled by a \emph{deterministic finite automaton} (DFA) $$G=(X, E, f, x_0,X_m),$$ where $X$ is the finite set of \emph{states}, $E$ is the finite set of \emph{events}, $f: X \times E \rightarrow X$ is the \emph{transition function}, $x_0 \in X$ is the \emph{initial state}, and $X_m\subseteq X$ is the set of \emph{marked states}. The transition function can be extended to $f: X\times E^*\rightarrow X$ in the usual way. Given $x\in X$ and $s\in E^*$, we denote $f(x,s)!$ that transition $f(x,s)$ is defined. The \emph{language generated} by $G$ is defined as $$L(G):=\{s\in E^*|f(x_0,s)!\}.$$ 
The prefix closure of language $L\subseteq E^*$ is $\overline{L}:=\{s\in E^*|(\exists s'\in E^*) ss'\in L\}$.

The set of events is partitioned into $E_o \subseteq E$ the set of events \emph{observable} by the supervisor and $E_{uo}=E\setminus E$ the set of \emph{unobservable} events. The \emph{natural projection} $P:E^* \rightarrow E^*_o$ is defined as i) $P(\eps):=\eps$; ii) for all $s\in E^*$ and $e\in E$, $P(se)=P(s)e$ if $e\in E_o$, and $P(se)=P(s)$, otherwise. Given a language $L\subseteq E^*$, its projection on $E_o$ is defined as $P(L):=\bigcup_{s\in L}\{P(s)\}$. Given an observation sequence $w\in E^*_o$, the set of strings in $L(G)$ whose projection on $E^*_o$ equal to $w$ is defined as $$P^{-1}(w):=\{s\in L(G)|P(s)=w\}.$$

Given a string $s\in L(G)$ and a sublanguage $K\subseteq L(G)$, we denote $$\Gamma(s):=\{e\in E|f(x_0,se)!\}$$ 
the set of events \emph{active} after $s$ in $G$; 
$$\Gamma_K(s):=\{e\in E|se\in \overline{K}\}$$ the set of events whose occurrences after $s$ is \emph{within} $\overline{K}$. Clearly, $\Gamma_K(s)\subseteq \Gamma(s)$, and $\Gamma(s)\setminus \Gamma_K(s)\neq \emptyset$ if and only if $\Gamma(s)\neq \Gamma_K(s)$.

\subsection{Supervisory control}
In traditional supervisory control (Ramadge-Wonham framework \cite{ramadge1989control}), the set $E$ of events is partitioned into the set $E_c$ of \emph{controllable} events whose occurrences can be disabled by the supervisor and the set $E_{uc}=E\setminus E_c$ of \emph{uncontrollable} events whose occurrences cannot be disabled. Given a specification $K\subseteq L(G)$, a supervisor $S:E_o^*\rightarrow 2^E$ under partial observation is designed to restrict the behavior of the plant: after observing $w\in P(L(G))$, the supervisor sends to the plant its \emph{control decision}, a subset of $E$ containing all uncontrollable events that can occur next. The closed-loop system of $G$ supervised by $S$ is denoted by $S/G$. The behsvior of $S/G$, denoted by $L(S/G)$, is formally defined as follows:
\begin{itemize}
    \item $\eps \in L(S/G)$; and 
    \item $se\in L(S/G) \Leftrightarrow s\in L(S/G),e\in S(P(s)), \text{ and } se\in L(G)$.
\end{itemize}

According to supervisory control theory, there exists a supervisor $S$ for $G$ such that the closed-loop behavior $L(S/G)=\overline{K}$ if and only if the specification $K$ is \emph{observable} and \emph{controllable} w.r.t $G$, $E_c$, and $P$.  

\section{Supervisory Control with Event Forcing}\label{sec:forcibility}
In the traditional supervisory control, the interaction between a supervisor and a plant is limited to the mechanism of enablement and disablement. Recently, work \cite{reniers2024supervisory} enriched this paradigm by introducing the concept of \emph{event forcing}. In this section, we first recall the definitions and results of event forcing and then extend the framework to supervisor under partial observation. Based on the extended framework, we define the notion of forcing consistency and discuss its relation with forcibility.

\subsection{Event forcing}
In forcing framework, the set $E$ of events is partitioned into $E_f\subseteq E$ the set of \emph{forcible} events and $E_{uf}=E\setminus E_f$ the set of \emph{unforcible} events. A forcible event ($f \in E_f$) is an event that can be forced by an external agent (e.g., supervisor) to occur, thereby preempting all other active events (including uncontrollable ones). The supervisor's control decision may include more than one forcible event for the sake of maximal freedom of choice; however, in implementation, the forcing agent nondeterministically selects exactly one forcible event from this set to execute and preempt all other events.

In this paper, we focus exclusively on event forcing. Specifically, the supervisor's only means of intervention is through the forcing of events, while the classical mechanisms of enablement and disablement are not considered. This allows us to isolate and study the unique challenges that partial observation imposes on the forcing mechanism. An extension of this work to include both traditional controllability and event forcibility will be further studied in future work. 


\begin{definition}\cite{reniers2024supervisory}\label{def:forciblity}
Given a plant $G$ and a set of forcible events $E_f$, a language $K\subseteq L(G)$ is \emph{forcible} (w.r.t $G$ and $E_f$) if $\forall s\in \overline K$,
\begin{align}
    & \Gamma(s)=\Gamma_K(s) \label{eq:F-1}\\ 
    & \vee \nonumber\\ 
    & (\Gamma_K(s)\cap E_f\neq \emptyset \wedge \Gamma_K(s)\setminus E_f=\emptyset). \label{eq:F-2}
\end{align}
\hfill $\diamond$
\end{definition}
In simple words, forcibility means that for any string $s$, its continuations must either already satisfy the specification (condition~\eqref{eq:F-1}), or there is at least one forcible event that retains the evolution within $\overline{K}$ while no unforcible events stays in $\overline{K}$ (condition~\eqref{eq:F-2}).

As proven in \cite{reniers2024supervisory}, forcibility is necessary and sufficient for the existence of a supervisor $S$ that guarantees $L(S/G) = \overline{K}$. Furthermore, the property of forcibility is \emph{closed under union}, ensuring the existence of a unique supremal forcible sublanguage for any given specification.

\subsection{Supervisor under partial observation}\label{sec:sup}
While the foundational work in \cite{reniers2024supervisory} assumes that the plant $G$ is a DFA and the supervisor possesses full observation (i.e., all events in $E$ are observable), real-world applications often involve sensors that can only detect a subset of events. In the following, we extend the theory of supervisory control with event forcing to the partial observation setting, where the supervisor must make forcing decisions based on the projection of system's evolutions.

The supervisor with events forcing under partial observation is defined as a function $S: P(L(G))\rightarrow 2^{E_f}$. It means that given an observation $w\in P(L(G))$ of the supervisor, the set of events forced to occur is $S(w)\subseteq E_f$. In this paper, no assumption is made on the relation between $E_o$ and $E_f$. Thus, there may exist events that are forcible but not observable, i.e., $E_f\cap E_{uo}\neq\emptyset$.

\begin{definition}\label{def:S/G}
The closed-loop system of $G$ under the control of partial-observation supervisor $S$ is denoted as $S/G$. The language generated by $S/G$ is recursively defined as follows:
\begin{itemize}
    \item $\eps\in L(S/G)$;
    \item $\forall s\in L(S/G)$ and $e\in \Gamma(s)$, ($S(P(s))\cap \Gamma(s)=\emptyset$ $\vee$ $e\in S(P(s))$) $\Leftrightarrow$ $se\in L(S/G)$. \hfill $\diamond$
\end{itemize}
\end{definition}

The condition $S(P(s))\cap \Gamma(s)=\emptyset$ implies that the supervisor does not force any events, and thus all active events $e$ can occur after $s$. In contrast, the condition $e\in S(P(s))$ means that event $e$ must occur after $s$. Consequently, any other active event not in $S(P(s))$ is preempted.

It is important to notify that the language $L(S/G)$ defined in Definition \ref{def:S/G} represents the collection of all possible event sequences that \emph{could potentially be generated under the control} of $S$. However, $L(S/G)$ may not be physically realized as a whole. This is because unlike traditional supervisory control with events disabling, in the forcing framework, the occurrence of any event $e \in S(P(s))$ is exclusive: if multiple events are candidates for forcing, only one of them is selected to execute, thereby preventing all other active events from occurring at that instant. Consequently, while $L(S/G)$ captures all the possible behaviors of the closed-loop system for the purpose of formal analysis, any actual physical run of the system will only result in a specific path within this language.

\subsection{Forcing consistency}\label{sec:FC}
In traditional supervisory control, the occurrence of an event is solely determined by the supervisor's enablement or disablement. Specifically, an event $e$ is prevented from occurring if and only if it is disabled (i.e., not included in the control decision). As originally introduced by Lin and Wonham \cite{lin1988observability}, the concept of \emph{observability} requires that for any two strings $s, s' \in \overline{K}$ sharing the same observation $P(s) = P(s')$, the supervisor's decision to enable or disable a controllable event must be consistent. Formally, for any controllable event $e$, it holds that $e \in \Gamma_K(s) \cap \Gamma(s')\Rightarrow e\in \Gamma_K(s')$. Under this framework, the control decision for an observation $P(s)$ is straightforward: it simply excludes all active events that would lead any $s\in P^{-1}(w)$ outside $K$. If $K$ is both controllable and observable, the resulting supervisor ensures that the closed-loop behavior is identical to $\overline{K}$.

However, when the control mechanism shifts from passive disablement to active event forcing, 
the non-occurrence of an event $e$ is no longer a simple consequence of its exclusion from the control decision. Instead, event $e$ may be preempted because another event was forced. This shift makes it difficult to formulate a suitable counterpart to observability. Consequently, establishing a necessary and sufficient condition for the existence of a supervisor such that $L(S/G) = \overline{K}$ becomes a non-trivial extension of classical theory. Examples~\ref{eg:motivation} and \ref{eg:SComputation} illustrate these challenges.

\begin{figure}
    \centering
    \includegraphics[width=0.35\textwidth]{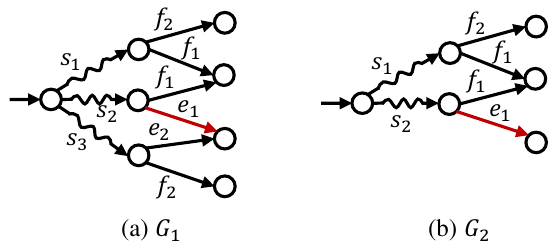}
    \caption{Plants $G_1$ and $G_2$, where $E=E_o=\{e_1,e_2,f_1,f_2\}$, $E_f=\{f_1,f_2\}$, and $s_1,s_2,s_3\in E^*$. Let $K_1=L(G_1)\setminus\{s_2e_1\}$ and $K_2=L(G_2)\setminus \{s_2e_1\}$.}\label{fig:G1G2}
\end{figure}

\begin{example}\label{eg:motivation}
 In the two plants shown in Fig.~\ref{fig:G1G2}, assume all events $E=\{e_1,e_2,f_1,f_2\}$ are observable, and $s_i\in E^*$ satisfies $P(s_i)=w$ for all $i\in \{1,2,3\}$. Let $f_1,f_2$ be forcible events and $e_1,e_2$ be unforcible. Consider plant $G_1$ in Fig.~\ref{fig:G1G2}(a) and the specification $K_1=L(G_1)\setminus\{s_2e_1\}$. According to Definition~\ref{def:forciblity}, $K_1$ is forcible w.r.t $G_1$ and $E_f=\{f_1,f_2\}$. Forcing $f_1$ is necessary to prevent $e_1$ from occurring after $s_2$. Moreover, for any $i\in \{1,2,3\}$, $s_if_1\in \overline{K}$. Namely, there is no conflict on forcing or not forcing $f_1$ after observing $w$. Thus, one possible control decision is $S(w)=\{f_1\}$.  However, this decision also preempts event $f_2$ after $s_1$, leading to $L(S/G)=L(G)\setminus\{s_2e_1,s_1f_2\}\neq \overline{K_1}$. Alternatively, the supervisor could force both $f_1$ and $f_2$ after observing $w$, i.e., $S(w)=\{f_1,f_2\}$. In this case, event $e_2$ after $s_3$ is preempted. As a result, $L(S/G_1)=L(G_1)\setminus\{s_2e_1,s_3e_2\}\neq\overline{K_1}$. In fact, a control decision $S(w)\subseteq E_f$ that guarantees $L(S/G_1)=\overline{K_1}$ does not exist. \hfill $\diamond$
\end{example}

Example~\ref{eg:motivation} demonstrates that even if (i) $K$ is forcible and (ii) the supervisor can make consistent forcing decisions for each forcible event (i.e., for all $s_i,s_j\in P^{-1}(w)$ and $f\in E_f$, $f\in \Gamma_K(s_i) \Rightarrow f\in \Gamma_K(s_j)$), a supervisor $S$ satisfying $L(S/G) = \overline{K}$ may still not exist.

In traditional supervisory control, when observing $w$, the supervisor's decision is a union of safe actions across $P^{-1}(w)$ excluding those that are unsafe for some $s\in P^{-1}(w)$. To precisely characterize these interactions in event forcing, we define the following sets of forcible events for any $s\in L(G)$ over a given $K\subseteq L(G)$: 
\begin{itemize}
    \item $F^+(s)=\left\{\begin{array}{ll}
        E_f \cap \Gamma_K(s), & \text{ if } \Gamma(s) \setminus \Gamma_K(s)\neq\emptyset;\\
        \emptyset, & \text{ otherwise};
    \end{array}
        \right.$

        the set of forcible events that can be forced to preempt transitions leaving $\overline{K}$; 
    \item $F^-(s)=E_f\cap(\Gamma(s)\setminus \Gamma_K(s))$: the set of forcible events that must not be forced because their occurrence violates $\overline{K}$.
\end{itemize}
 The following example shows that the control decision for $P(s)=w$ cannot be obtained through simple set operations on $F^+(s)$, $F^-(s)$ across all $s\in P^{-1}(w)$, as is typically done in classical supervisory control.

\begin{example}\label{eg:SComputation}[Continuing Example~\ref{eg:motivation}]
Consider plant $G_2$ in Fig.~\ref{fig:G1G2}(b) and let $K_2=L(G_2)\setminus\{s_2e_1\}$, which is forcible w.r.t $G_2$ and $E_f$. Following the classical logic of ``enable what is necessary and disable what is unsafe,'' an intuitive control decision for $w$ would be $S(w) = (\bigcup_{s \in P^{-1}(w)} F^+(s)) \setminus (\bigcup_{s \in P^{-1}(w)} F^-(s)) = \{f_1\}$. 
However, this decision results in $L(S/G_2)=L(G_2)\setminus\{s_2e_1,s_1f_2\}\neq \overline{K_2}$. To exactly enforce $\overline{K_2}$, the supervisor must actually include $f_2$ in its decision ($S(w)=\{f_1,f_2\}$), even though $f_2$ is not required for preemption after $s_2$. \hfill $\diamond$
\end{example}

These examples demonstrate that achieving $L(S/G) = \overline{K}$ requires more than just preventing transitions exiting $K$. The interdependence between forcing decisions and event preemptions significantly complicates supervisor synthesis. This suggests that a proper notion of ``observability'' for event forcing must account for these preemptive effects, leading to a definition that is less intuitive than the classical case. Based on these observations, we formalize the requirements for $K$ through the following notion of \emph{forcing consistency}, which provides a necessary and sufficient condition for the existence of such a supervisor.

\begin{definition}[Forcing Consistency]\label{def:FC}
    Given a plant $G$, a set of forcible events $E_f$, and the projection $P$, a sublanguage $K\subseteq L(G)$ is \emph{forcing consistent} (w.r.t. $G$, $E_f$, and $P$) if $\forall w\in P(\overline K)$, there exists a subset $E^\text{cd}_f(w)\subseteq E_f$ of forcible events such that $\forall s\in P^{-1}(w)\cap \overline K$,
    \begin{align}
        & (\Gamma(s)\cap E^\text{cd}_f(w)\neq \emptyset  \Rightarrow \Gamma_K(s)= E^\text{cd}_f(w)\cap \Gamma(s)) \label{eq:FC-1}\\
        & \vee \nonumber  \\
        & (\Gamma(s)\cap E^\text{cd}_f(w) = \emptyset  \Rightarrow \Gamma(s)=\Gamma_K(s)). \label{eq:FC-2}
    \end{align}
        \hfill $\diamond$ 
\end{definition}

Forcing consistency requires that for any observation $w$, there must exist a corresponding common set $E^\text{cd}_f(w)$ of forcible events for all $s\in P^{-1}(w)$. This set ensures two conditions: first, for $s$ satisfying $\Gamma(s)\cap E^\text{cd}_f(w)\neq \emptyset$, the decision to force or not force an event in $E_f$ is consistent --- forcing any event in $E^\text{cd}_f(w)$ would keep the continuation of $s$ within $\overline{K}$ while forcing any event in $E_f\setminus E^\text{cd}_f(w)$ would lead a string $s$ violating $\overline{K}$; second, any event preempted by this forcing action must be one that would have otherwise led the system outside $\overline{K}$. The condition \eqref{eq:FC-2} is needed to ensure that when nothing is forced after some $s$, no event can lead $s$ outside $\overline{K}$. 
\begin{figure}
    \centering
    \includegraphics[width=0.48\textwidth]{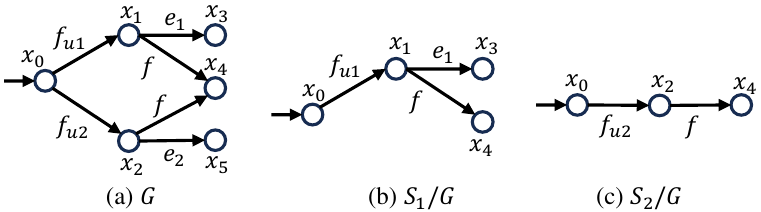}
    \caption{Plant $G$, where $E_f=\{f_{u1},f_{u2},f\}$ and $E_o=\{f,e_1,e_2\}$. Two forcing consistent sublanguages $K_1=L(S_1/G)$ and $K_2=L(S_2/G)$.}\label{fig:NotClose}
\end{figure}

\begin{example}\label{eg:FC}
    Let us revisit Example~\ref{eg:motivation}. For observation $w$ in $G_1$, neither conditions \eqref{eq:FC-1} nor \eqref{eq:FC-2} holds for any possible value of $E^\text{cd}_f(w)\subseteq E_f=\{f_1,f_2\}$. According to Definition~\ref{def:FC}, $K_1$ is not forcing consistency w.r.t $G_1$, $E_f$, and $P$.

    Consider plant $G$ in Fig.~\ref{fig:NotClose}(a) and two sublanguages $K_1=\overline{\{f_{u1}f,f_{u1}e_1\}},K_2=\overline{\{f_{u2}f\}}$ of $L(G)$. The sets of forcible and observable events are $E_f=\{f,f_{u1},f_{u2}\}$ and $E_o=\{f,e_1,e_2\}$, respectively. Both $K_1$ and $K_2$ are forcing consistent w.r.t $G$, $E_f$, and $P$. Indeed,
    \begin{itemize}
    \item For $K_1$, $\forall w\in P(\overline{K_1})$, sets $E_f^{\text{cd}}(w)\subseteq E_f$ satisfying conditions~\eqref{eq:FC-1} or \eqref{eq:FC-2} exist: 
    $$E_f^{\text{cd}}(w)=\left\{\begin{array}{ll}
    \{f_{u1}\}, & \text{ if } w=\eps;\\
    \emptyset, &\text{ otherwise}.
    \end{array}\right.$$
    \item For $K_2$, $\forall w\in P(\overline{K_2})$, sets $E_f^{\text{cd}}(w)\subseteq E_f$ satisfying conditions~\eqref{eq:FC-1} or \eqref{eq:FC-2} exist:
    $$E_f^{\text{cd}}(w)=\left\{\begin{array}{ll}
    \{f,f_{u2}\}, & \text{ if } w=\eps;\\
    \emptyset, &\text{ otherwise}.
    \end{array}\right.$$
    \end{itemize}
\hfill $\diamond$
\end{example}

In traditional supervisory control, controllability and observability are independent properties: one does not imply the other. Therefore, $K$ must satisfy both properties to guarantee the existence of a supervisor. Forcing consistency, however, is a \emph{strictly stronger} property than forcibility. Specifically, forcing consistency extends the requirement of forcibility by imposing an observational constraint: it requires not only the existence of a valid forcing action to keep the system within $\overline{K}$, but also that this choice remains uniform and valid across all indistinguishable strings.
Next we formally prove that if a language $K$ is forcing consistent, it is inherently forcible. The converse is not necessarily true.

\begin{proposition}\label{prop:FC-F}
     Given a plant $G$, a set of forcible events $E_f$, and the projection $P$, if a language $K\subseteq L(G)$ is forcing consistent, it is also forcible.
\end{proposition}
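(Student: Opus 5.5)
The plan is a direct pointwise verification. Take an arbitrary $s\in\overline K$ and show that one of the two disjuncts \eqref{eq:F-1} or \eqref{eq:F-2} of Definition~\ref{def:forciblity} holds at $s$. The only tool needed is the set $E^\text{cd}_f(w)$ supplied by forcing consistency (Definition~\ref{def:FC}) for the observation $w=P(s)$.

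First, fix $s\in\overline K$ and put $w:=P(s)$. Then $w\in P(\overline K)$. Also $s\in L(G)$ (since $K\subseteq L(G)$ and $L(G)$ is prefix-closed) and $P(s)=w$, so $s\in P^{-1}(w)\cap\overline K$. By forcing consistency there is a set $E^\text{cd}_f(w)\subseteq E_f$ for which \eqref{eq:FC-1}--\eqref{eq:FC-2} hold at this $s$.

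Next, I split into two cases according to whether $\Gamma(s)\cap E^\text{cd}_f(w)$ is empty. This is the same split that governs the two implications in Definition~\ref{def:FC}.

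\emph{Case $\Gamma(s)\cap E^\text{cd}_f(w)=\emptyset$.} The antecedent of \eqref{eq:FC-2} holds, so $\Gamma(s)=\Gamma_K(s)$. This is exactly \eqref{eq:F-1}.

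\emph{Case $\Gamma(s)\cap E^\text{cd}_f(w)\neq\emptyset$.} The antecedent of \eqref{eq:FC-1} holds, so $\Gamma_K(s)=E^\text{cd}_f(w)\cap\Gamma(s)$. Since $E^\text{cd}_f(w)\subseteq E_f$, we get $\Gamma_K(s)\subseteq E_f$, and hence $\Gamma_K(s)\setminus E_f=\emptyset$. Moreover $\Gamma_K(s)=E^\text{cd}_f(w)\cap\Gamma(s)$ is nonempty by the case hypothesis, and it is contained in $E_f$, so $\Gamma_K(s)\cap E_f=\Gamma_K(s)\neq\emptyset$. This gives \eqref{eq:F-2}.

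Since $s\in\overline K$ was arbitrary, $K$ is forcible.

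The only delicate point is how to read Definition~\ref{def:FC}. Written literally as a disjunction of two implications, the condition would be vacuous, because at least one of the two antecedents always fails. So I will state explicitly at the outset that the definition is understood as the conjunction of the two implications, i.e. the case distinction on $\Gamma(s)\cap E^\text{cd}_f(w)$. This reading is confirmed by the explanatory text after the definition and by Example~\ref{eg:FC}. Under this reading the argument above is straightforward, and no step should present a genuine obstacle.
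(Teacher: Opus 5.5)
Your proof is correct and follows essentially the same argument as the paper. Both split on whether $\Gamma(s)\cap E^\text{cd}_f(w)$ is empty, obtaining \eqref{eq:F-2} in one case and \eqref{eq:F-1} in the other; the paper phrases this as a contradiction rather than directly, and your explicit remarks that $\Gamma_K(s)\neq\emptyset$ in the first case and that Definition~\ref{def:FC} must be read as a conjunction of the two implications (which is how the paper itself uses it) spell out points the paper leaves implicit.
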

\begin{proof}
    Suppose, for the sake of contradiction, that $K$ is forcing consistent but not forcible. Then there exists a string $s\in \overline{K}$ that fails to satisfy both conditions~\eqref{eq:F-1} and \eqref{eq:F-2}. Let $w=P(s)$, and let $E^{cd}_f(w)\subseteq E_f$ be the set of forcible events that ensure any $s\in P^{-1}(w)$ satisfies condition~\eqref{eq:FC-1} or \eqref{eq:FC-2}. We consider the following two cases:
    
    Case 1: $\Gamma(s)\cap E_f^{cd}(w)\neq \emptyset$. Since $K$ is forcing consistent, we have $\Gamma_K(s)=\Gamma(s)\cap E_f^{cd}(w)$. It follows that $\Gamma_K(s)\subseteq E^{cd}_f(w)$. Given that $E_f^{cd}(w)\subseteq E_f$,  this implies $\Gamma_K(s)\cap E_f\neq \emptyset$ and $\Gamma_K(s) \setminus E_f=\emptyset$. Thus, string $s$ satisfies condition~\eqref{eq:F-2}, which is a contradiction.

    Case 2: $\Gamma(s)\cap E_f^{cd}(w)= \emptyset$. Since $K$ is forcing consistent, $\Gamma_K(s)=\Gamma(s)$. In this case, $s$ satisfies condition~\eqref{eq:F-1}, which again leads to a contradiction.

    The converse of the proposition does not hold. For instance, the language $K_1$ in Example~\ref{eg:motivation} is forcible but not forcing consistent.
\end{proof}

To verify if a regular language $K$ is forcing consistent, or to determine the sets $E^{cd}_f(w)$ for a given forcing consistent $K$, a straightforward method is to use an observer-based approach (similar to the method in Sec.~3.7.1 of \cite{cassandras2021introduction}). By transforming the language $K$ into a state specification and constructing the observer of the automaton generating $\overline{K}$, the required conditions can be verified over a finite state set.

Each state in the observer represents a set of plant states $X'\subseteq X$ consistent with the observation $w \in P(\overline{K})$. For each $x\in X'$, one can test all $2^{|E_f|}$ subsets of $E_f$. A subset is valid if it satisfies condition~\eqref{eq:FC-1} or \eqref{eq:FC-2} for the active event set at $x$ and the safe active event set at $x$. While this exhaustive search is not optimized, it provides a direct way to verify the property and determine $E^{cd}_f(w)$.  Examples are provided below to illustrate the idea.
\begin{example}\label{eg:test}
    Consider again plant $G$ in Fig.~\ref{fig:NotClose}(a) and let $K=L(G)\setminus\{f_{u2}e_2\}$. The corresponding state specification of $K$ is defined by the set of safe states $X_K=X\setminus \{x_5\}$. The automaton $G_K$ generating $\overline{K}$ and its observer are depicted in Fig.~\ref{fig:GK}.     
    For $w=\eps$, the set of states in $G_K$ consistent with the observation is $y_0=\{x_0,x_1,x_2\}$. The sets of active events at each consistent state in $G$ are $\Gamma(x_0)=\{f_{u1},f_{u2}\}$, $\Gamma(x_1)=\{f,e_1\}$, and $\Gamma(x_2)=\{f,e_2\}$. The sets of safe active events (those not reaching state $x_5$) at each consistent state in $G$ are $\Gamma_K(x_0)=\Gamma(x_0)$, $\Gamma_K(x_1)=\Gamma(x_1)$, and $\Gamma_K(x_2)=\Gamma(x_2)\setminus\{e_2\}$. 
    By testing every subset of $E_f$, we find that none of them satisfy condition~\eqref{eq:FC-1} or \eqref{eq:FC-2} (with $\Gamma(s)$ and $\Gamma_K(s)$  replaced by $\Gamma(x_i)$ and $\Gamma_K(x_i)$) for all $x_i\in y_0$. Thus, no such set $E_f^{cd}(\eps)\subseteq E_f$ exists, implying that $K$ is not forcing consistent.

    Now consider $K_1$ in Fig.~\ref{fig:NotClose}(b). The automaton $G_{K_1}$ generating $\overline{K_1}$ is identical to the one in Fig.~\ref{fig:NotClose}(b). For $w=\eps$, the set of consistent states in $G_{K_1}$ is $\{x_0,x_1\}$. For each state here, $\Gamma(x_0)=\{f_{u1},f_{u2}\}$ and $\Gamma(x_1)=\{f,e_1\}$, while $\Gamma_K(x_0)=\Gamma(x_0)\setminus\{f_{u2}\}$ and $\Gamma_K(x_1)=\Gamma(x_1)$. The only subset of $E_f$ that ensures $\Gamma(x_i)$ and $\Gamma_K(x_i)$ satisfy condition~\eqref{eq:FC-1} or \eqref{eq:FC-2} for all $x_i\in \{x_0,x_1\}$ is $\{f_{u1}\}$. Therefore, $E^{cd}_f(\eps)=\{f_{u1}\}$. For $w=e_1$ (resp., $f$) in $P(\overline{K})$,  any subset of $E_f$ satisfies the conditions for the consistent state $x_3$ (resp., $x_4$); thus, $E^{cd}_f(e_1)$ (resp., $E^{cd}_f(f)$) can be any subset of $E_f$. It follows that $K_1$ is forcing consistent, and the sets $E^{cd}_f(w)$ are determined for all $w\in P(\overline{K_1})$. \hfill $\diamond$
\end{example}

\begin{figure}
    \centering
    \includegraphics[width=0.32\textwidth]{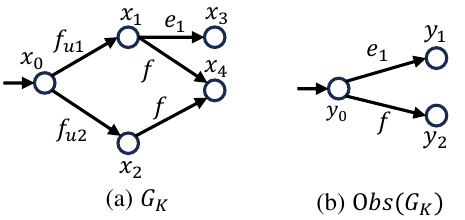}
    \caption{(a) Automaton $G_K$ with $L(G_K)=\overline{K}$ and (b) its observer, where $y_0=\{x_0,x_1,x_2\}$, $y_1=\{x_3\}$, and $y_2=\{x_4\}$.}\label{fig:GK}
\end{figure}

\section{Supervisor Existence and Property of Forcing Consistency}\label{sec:property}

\subsection{Supervisory existence}\label{sec:NecAndSuf}
Based on the definition of forcing consistency and its relation with forcibility, we now establish the necessary and sufficient condition for the existence of a partial-observation supervisor that achieves the specification $K$ via event forcing.

\begin{theorem}\label{thm:S}
Given a plant $G$, a set of forcible events $E_f$, the projection $P$, and a language $\emptyset\neq K\subseteq L(G)$, there exists a partial-observation supervisor $S:P(L(G))\rightarrow 2^{E_f}$ such that $L(S/G)=\overline K$ if and only if $K$ is forcing consistent w.r.t. $G$, $E_f$, and $P$.
\end{theorem}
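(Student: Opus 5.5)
The plan is to prove both directions directly from Definition~\ref{def:S/G}, using the observation-indexed set $E^\text{cd}_f(w)$ of Definition~\ref{def:FC} as the control decision itself. I read conditions~\eqref{eq:FC-1}--\eqref{eq:FC-2} in their intended sense: for each $s\in P^{-1}(w)\cap\overline K$, if $\Gamma(s)\cap E^\text{cd}_f(w)\neq\emptyset$ then $\Gamma_K(s)=E^\text{cd}_f(w)\cap\Gamma(s)$, and otherwise $\Gamma_K(s)=\Gamma(s)$. This is also how the definition is used in the proof of Proposition~\ref{prop:FC-F}. Two preliminary facts will be used throughout.
\begin{itemize}
\item $L(S/G)$ is prefix-closed by construction, since each string enters only by extending a string already in it.
\item $\eps\in\overline K$ because $K\neq\emptyset$.
\end{itemize}

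\emph{Sufficiency.} Assume $K$ is forcing consistent. Define $S(w)=E^\text{cd}_f(w)$ for $w\in P(\overline K)$, and $S(w)=\emptyset$ for the remaining $w\in P(L(G))$. I prove $s\in L(S/G)\Leftrightarrow s\in\overline K$ by induction on $|s|$. The base case is $\eps$, which lies in both sets.

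For the inductive step take $se$ with $e\in E$. If $se\notin L(G)$, it lies in neither set. If $s\notin\overline K$, then by the induction hypothesis $s\notin L(S/G)$, and prefix-closure of both languages excludes $se$ from both. So assume $s\in\overline K=L(S/G)$ and $e\in\Gamma(s)$, and put $w=P(s)\in P(\overline K)$. Then $S(P(s))=E^\text{cd}_f(w)$, and there are two cases.
\begin{itemize}
\item If $\Gamma(s)\cap E^\text{cd}_f(w)\neq\emptyset$, Definition~\ref{def:S/G} gives $se\in L(S/G)$ iff $e\in E^\text{cd}_f(w)\cap\Gamma(s)$. By \eqref{eq:FC-1} this set equals $\Gamma_K(s)$, so $se\in L(S/G)$ iff $se\in\overline K$.
\item If $\Gamma(s)\cap E^\text{cd}_f(w)=\emptyset$, then every $e\in\Gamma(s)$ gives $se\in L(S/G)$. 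By \eqref{eq:FC-2}, $\Gamma(s)=\Gamma_K(s)$, so every such $se$ is also in $\overline K$.
\end{itemize}
This closes the induction, hence $L(S/G)=\overline K$.

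\emph{Necessity.} Assume $S$ satisfies $L(S/G)=\overline K$. For $w\in P(\overline K)\subseteq P(L(G))$ set $E^\text{cd}_f(w):=S(w)\subseteq E_f$. Take any $s\in P^{-1}(w)\cap\overline K$; then $s\in L(S/G)$, and for every $e\in\Gamma(s)$ we have $se\in\overline K$ iff $se\in L(S/G)$. By Definition~\ref{def:S/G} there are again two cases.
\begin{itemize}
\item If $S(w)\cap\Gamma(s)\neq\emptyset$, exactly the events in $S(w)\cap\Gamma(s)$ extend $s$ within $L(S/G)$. Hence $\Gamma_K(s)=E^\text{cd}_f(w)\cap\Gamma(s)$, which is \eqref{eq:FC-1}.
\item Otherwise all of $\Gamma(s)$ extends $s$, so $\Gamma_K(s)=\Gamma(s)$, which is \eqref{eq:FC-2}.
\end{itemize}
Since the set $S(w)$ does not depend on the choice of $s$ within $P^{-1}(w)$, it serves as the single common set required by Definition~\ref{def:FC}. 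Hence $K$ is forcing consistent.

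The computations themselves are routine once each case is matched to the corresponding condition. The main point needing care is the bookkeeping. The definition of $S$ outside $P(\overline K)$ is irrelevant, because no string of $L(S/G)$ ever reaches those observations. The inductive step must also exclude strings that leave $\overline K$ at an earlier point, which is exactly what prefix-closure provides.
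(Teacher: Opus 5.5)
Your proof is correct. It has the same skeleton as the paper's: for sufficiency, build $S$ from the sets $E^\text{cd}_f(w)$ and induct on string length; for necessity, take $E^\text{cd}_f(w):=S(w)$. The construction and case analysis differ, though, and your version is tidier.

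The paper's supervisor has an extra branch, $S(w):=\emptyset$, for observations where every consistent $\hat s$ satisfies $\Gamma_K(\hat s)\cap E_f=\emptyset$ or $\Gamma(\hat s)=\Gamma_K(\hat s)$. It handles that branch by appealing to forcibility (Proposition~\ref{prop:FC-F}). You set $S(w)=E^\text{cd}_f(w)$ on all of $P(\overline K)$, so every inductive case follows directly from \eqref{eq:FC-1} or \eqref{eq:FC-2}, and Proposition~\ref{prop:FC-F} is never needed.

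Your necessity argument is also uniform in $S$, which is more careful than the paper's. The paper's Case~1 asserts $S(w)=\emptyset$ ``by Eq.~\eqref{eq:S(w)-1}''. That is a property of the constructed supervisor, not of an arbitrary $S$ with $L(S/G)=\overline K$; such an $S$ may, for instance, force an event that is not active after any consistent string. Your single-case argument covers this situation.

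Similarly, in the paper's sufficiency Case~2(i), the step $\Gamma(s)=\Gamma_K(s)$ actually follows from \eqref{eq:FC-2}, as in your proof, not from forcibility alone.

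The paper's construction buys a supervisor that forces nothing at observations where no consistent string needs preemption. Yours buys a shorter, self-contained argument.

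Your explicit conjunctive reading of \eqref{eq:FC-1}--\eqref{eq:FC-2} is the right one. Read literally, a disjunction of two implications with complementary hypotheses is always true, so the definition would be vacuous.
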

\begin{proof}
(If) Suppose that $K$ is forcing consistent w.r.t $G$, $E_f$, and $P$. Then, By Definition~\ref{def:FC}, the forcing consistency of $K$ implies that for each observation $w \in P(\overline{K})$, there exists at least one subset of forcible events $E^\text{cd}_f(w) \subseteq E_f$ that satisfies either condition~\eqref{eq:FC-1} or \eqref{eq:FC-2}. Furthermore, $K$ is forcible w.r.t $G$ and $E_f$ according to Proposition~\ref{prop:FC-F}.
We now construct a supervisor $S:P(L(G))\rightarrow 2^{E_f}$ by using these predefined sets $E^{cd}_f(w)$ as follows: $\forall w\in P(L(G))$,
    \begin{itemize}
        \item If ($\forall s\in P^{-1}(w)\cap \overline K$) $\Gamma_K(s)\cap E_f=\emptyset$ $\vee$ $\Gamma(s)=\Gamma_K(s)$, 
            \begin{equation}\label{eq:S(w)-1}
                S(w):=\emptyset.
            \end{equation}
        \item Otherwise,
            \begin{equation}\label{eq:S(w)-2}
                S(w):=E^\text{cd}_f(w).
            \end{equation}
    \end{itemize}
In other words, if no forcible event occurs after any consistent string $s$ or no event can lead $s$ outside $\overline K$, then no forcible event is enforced by the supervisor. Otherwise (i.e., ($\exists s\in P^{-1}(w)\cap \overline K$) $\Gamma_K(s)\cap E_f\neq \emptyset$ $\wedge$ $\Gamma(s)\setminus \Gamma_K(s)\neq \emptyset$), forcible events in $E^\text{cd}_f(w)$, if the set is not empty, are forced.  

We claim that under the control of the supervisor defined above $$L(S/G)=\overline K.$$ We prove this claim by induction on the length of strings $s\in L(G)$. For the base case, as $K\neq \emptyset$, $\eps\in \overline K$. By Definition~\ref{def:S/G}, $\eps\in L(S/G)$. Assume the induction hypothesis that for a string $s\in E^*$, $$s\in L(S/G)\Leftrightarrow s\in \overline{K}.$$ Let $P(s)=w$ and $e\in \Gamma(s)$. We first prove that $se\in L(S/G)\Rightarrow se\in \overline K$. By the hypothesis, $s\in \overline K$. We consider two cases.

Case 1: ($\forall \hat s\in P^{-1}(w)\cap \overline K$) $\Gamma_K(\hat s)\cap E_f=\emptyset$ $\vee$ $\Gamma(\hat s)=\Gamma_K(\hat s)$. In this case, if $\Gamma(s)=\Gamma_K(s)$, $se\in \overline{K}$. If $\Gamma_K(s)\cap E_f=\emptyset$, as $K$ is forcible, Eq.~\eqref{eq:F-1} must hold. Thus, $\Gamma_K(s)=\Gamma(s)$ and $se\in \overline{K}$.

Case 2: ($\exists \hat s\in P^{-1}(w)\cap \overline K$) $\Gamma_K(\hat s)\cap E_f\neq \emptyset$ $\wedge$ $\Gamma(\hat s)\setminus \Gamma_K(\hat s)\neq\emptyset$. Since $se\in L(S/G)$, by Definition~\ref{def:S/G}, two possibilities: (i) $S(w)\cap \Gamma(s)=\emptyset$ and (ii) $e\in S(w)$, are considered. For (i), $S(w)\cap \Gamma(s)=E^\text{cd}_f(w)\cap \Gamma(s)=\emptyset$. Since $K$ is forcible, Eq.~\eqref{eq:F-1} holds and $\Gamma(s)=\Gamma_K(s)$. Thus, $se\in \overline K$. For (ii), it implies that $S(w)\cap \Gamma(s)=E^\text{cd}_f(w)\cap \Gamma(s)\neq \emptyset$. As $K$ is forcing consistent, condition~\eqref{eq:FC-1} holds and $\Gamma_K(s)=E^\text{cd}_f(w)\cap \Gamma(s)=S(w)\cap \Gamma(s)$. Thus, $e\in \Gamma_K(s)$ and $se\in \overline K$.

In both cases above, we established $se\in L(S/G)\Rightarrow se\in \overline K$. Now we prove $se\in \overline K\Rightarrow se\in L(S/G)$. Since $\overline K$ is prefix-closed, $s\in \overline K$. By the hypothesis, $s\in L(S/G)$. To show that $se\in L(S/G)$, by Definition~\ref{def:S/G}, we need to prove either $e\in S(w)$ or $S(w)\cap \Gamma(s)=\emptyset$. We consider again the two cases above.

Case 1: ($\forall \hat s\in P^{-1}(w)$) $\Gamma_K(\hat s)\cap E_f=\emptyset$ $\vee$ $\Gamma(\hat s)=\Gamma_K(\hat s)$. In this case, by Eq.~\eqref{eq:S(w)-1}, $S(w)=\emptyset$. As a result, $S(w)\cap \Gamma(s)=\emptyset$.

Case 2: ($\exists \hat s\in P^{-1}(w)$) $\Gamma_K(\hat s)\cap E_f\neq \emptyset$ $\wedge$ $\Gamma(\hat s)\setminus \Gamma_K(\hat s)\neq\emptyset$. In this case, by Eq.~\eqref{eq:S(w)-2}, $S(w)=E^\text{cd}_f(w)$. Since $se\in \overline K$, $e\in \Gamma_K(s)$. Suppose that $e\notin S(w)$ and $S(w)\cap \Gamma(s)\neq \emptyset$. Since $K$ is forcing consistent and $S(w)\cap \Gamma(s)\neq \emptyset$, by condition~\eqref{eq:FC-1}, $\Gamma_K(s)=E^\text{cd}_f(w)\cap \Gamma(s)=S(w)\cap \Gamma(s)$. Thus, $e\in S(w)$, which is a contradiction.

In both cases above, we prove $se\in \overline{K}\Rightarrow se\in L(S/G)$. Having shown both directions, we conclude that $L(S/G)=\overline K$.

(Only if) Suppose that there exists a supervisor $S:P(L(G))\rightarrow 2^{E_f}$ such that $L(S/G)=\overline K$. We show that $K$ is forcing consistent w.r.t $G$, $E_f$, and $P$. Let $w\in P(\overline K)$ and $s\in P^{-1}(w)\cap \overline K$.
Consider the two cases again.

Case 1: ($\forall \hat s\in P^{-1}(w)\cap \overline K$) $\Gamma_K(\hat s)\cap E_f=\emptyset$ $\vee$ $\Gamma(\hat s)=\Gamma_K(\hat s)$. By Eq.~\eqref{eq:S(w)-1}, $S(w)=\emptyset$. Let $E_f^{cd}(w)=S(w)$. By Definition~\ref{def:S/G}, $\forall e\in \Gamma(s)$, $se\in L(S/G)=\overline K$. This implies that $\Gamma(s)=\Gamma_K(s)$. Therefore, condition~\eqref{eq:FC-2} is satisfied for all $s\in P^{-1}(w)$. 

Case 2: ($\exists \hat s\in P^{-1}(w)\cap \overline K$) $\Gamma_K(\hat s)\cap E_f\neq \emptyset$ $\wedge$ $\Gamma(\hat s)\setminus \Gamma_K(\hat s)\neq\emptyset$. In this case, let $E^\text{cd}_f(w)=S(w)$. We consider the two possibilities again: (i) $S(w)\cap \Gamma(s)=\emptyset$ and (ii) $S(w)\cap \Gamma(s)\neq \emptyset$. For (i), $E_f^\text{cd}(w)\cap \Gamma(s)=\emptyset$. By Definition~\ref{def:S/G}, $\forall e\in \Gamma(s)$, $se\in L(S/G)$. As $L(S/G)=\overline K$, $se\in \overline K$. Therefore, $\Gamma(s)=\Gamma_K(s)$ and condition~\eqref{eq:FC-2} is satisfied.  
For (ii), $E_f^\text{cd}(w)\cap \Gamma(s)\neq\emptyset$. As $E_f^\text{cd}(w)\subseteq E_f$, $\Gamma(s)\cap E_f\neq \emptyset$. Let $e\in S(w)\cap \Gamma(s)$. By Definition~\ref{def:S/G}, $se\in L(S/G)=\overline K$. Hence, $e\in \Gamma_K(s)$ and $S(w)\cap \Gamma(s)\subseteq \Gamma_K(s)$. Let $e\in \Gamma_K(s)$. As $\overline K=L(S/G)$, $se\in L(S/G)$. By Definition~\ref{def:S/G}, $e\in S(w)$. Thus, $\Gamma_K(s)\subseteq S(w)\cap \Gamma(s)$. We have $\Gamma_K(s)= S(w)\cap \Gamma(s)=E^\text{cd}_f(w)\cap \Gamma(s)$. That is, condition \eqref{eq:FC-1} in Definition~\ref{def:FC} is satisfied.

We can conclude that for all $w\in P(\overline K)$ and $s\in P^{-1}(w)\cap \overline K$, the set $E^\text{cd}_f(w)$ satisfying 
condition~\eqref{eq:FC-1} or \eqref{eq:FC-2} in Definition~\ref{def:FC} exists. Therefore, $\overline K$ is forcing consistent w.r.t $G$, $E_f$, and $P$.
\end{proof}

Theorem~\ref{thm:S} establishes that forcing consistency is a necessary and sufficient condition for the existence of an event forcing supervisor under partial observation. Once a (regular) language $K$ is verified to be forcing consistency, the sets $E^{cd}_f(w)$ can be determined using the approach presented at the end of Section~\ref{sec:forcibility}. Accordingly, the partial-observation supervisor $S$ can be derived following equations~\eqref{eq:S(w)-1} and \eqref{eq:S(w)-2}.

\begin{example}\label{eg:theorem}[Continuing Example~\ref{eg:FC}]
    Consider again the plant and specifications $K_1$ and $K_2$ shown in Fig.~\ref{fig:NotClose}. As discussed in Example~\ref{eg:FC}, both $K_1$ and $K_2$ are forcing consistent. Based on equations~\eqref{eq:S(w)-1} and \eqref{eq:S(w)-2}, the supervisors that enforce $K_1$ and $K_2$ are defined respectively as: 
    $$S_1(w)=\left\{\begin{array}{ll}
        \{f_{u1}\} & \text{ if } w=\eps,\\
        \emptyset & \text{ otherwise}.
    \end{array}\right.$$
        $$S_2(w)=\left\{\begin{array}{ll}
        \{f,f_{u2}\} & \text{ if } w=\eps,\\
        \emptyset & \text{ otherwise}.
    \end{array}\right.$$
    Indeed, according to Definition~\ref{def:S/G}, the resulting closed-loop systems are $L(S_1/G)=\overline{K_1}$ and $L(S_2/G)=\overline{K_2}$, as illustrated in Figs.~\ref{fig:NotClose}(b) and (c). \hfill $\diamond$
\end{example}

\subsection{Property of forcing consistency}
Having established the necessary and sufficient condition for a supervisor existence, we now examine the properties of forcing consistency. In traditional supervisory control, properties such as controllability and observability often exhibit closure under union or intersection, which enables the synthesis of a unique supremal or infimal supervisor. However, forcing consistency behaves differently due to the preemptive nature of the forcing mechanism.

\begin{proposition}\label{prop:FC-closed}
Let $G$ be the plant, $E_f$ a set of forcible events, and $P$ the projection. Given two nonempty sublanguages $K_1,K_2\subseteq L(G)$ that are both forcing consistent w.r.t $G$, $E_f$, and $P$, 
   \begin{enumerate}
       \item[1.] $K_1 \cup K_2$ need \emph{not} be forcing consistent;
       \item[2.] $K_1 \cap K_2$ need \emph{not} be forcing consistent.
   \end{enumerate}
\end{proposition}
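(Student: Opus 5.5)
We prove both items with the plant $G$ of Fig.~\ref{fig:NotClose}(a) and the two languages $K_1=\overline{\{f_{u1}f,f_{u1}e_1\}}$ and $K_2=\overline{\{f_{u2}f\}}$. Example~\ref{eg:FC} already shows that both are forcing consistent, as witnessed by the supervisors $S_1,S_2$ of Example~\ref{eg:theorem}. We then show directly from Definition~\ref{def:FC} that neither $K_1\cup K_2$ nor $K_1\cap K_2$ is forcing consistent.

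Throughout, we read Definition~\ref{def:FC} as it is used in Examples~\ref{eg:FC} and \ref{eg:test}. For each $s$, if $\Gamma(s)\cap E^{cd}_f(w)\neq\emptyset$ then $\Gamma_K(s)=E^{cd}_f(w)\cap\Gamma(s)$, and otherwise $\Gamma(s)=\Gamma_K(s)$. This reading is necessary: a disjunction of the two implications taken literally is a tautology. All the data needed are visible in the figure. We have $\Gamma(\eps)=\{f_{u1},f_{u2}\}$, $\Gamma(f_{u1})=\{f,e_1\}$ and $\Gamma(f_{u2})=\{f,e_2\}$. Moreover, $f_{u1},f_{u2}$ are unobservable, so $\eps,f_{u1},f_{u2}$ all have observation $\eps$.

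\emph{Union.} Let $K=K_1\cup K_2$, so that $\overline K=\{\eps,f_{u1},f_{u2},f_{u1}f,f_{u1}e_1,f_{u2}f\}$. All of $\eps,f_{u1},f_{u2}$ lie in $P^{-1}(\eps)\cap\overline K$. Suppose a set $E^{cd}_f(\eps)=:D$ exists, and check the three strings.
\begin{itemize}
\item At $f_{u1}$: $\Gamma_K(f_{u1})=\{f,e_1\}=\Gamma(f_{u1})$. If $f\in D$, the forcing branch would require $\Gamma_K(f_{u1})=D\cap\Gamma(f_{u1})=\{f\}$, which is false because $e_1$ is unforcible. Hence $f\notin D$.
\item At $f_{u2}$: $\Gamma_K(f_{u2})=\{f\}\neq\Gamma(f_{u2})$. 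So the non-forcing branch fails, and we need $D\cap\{f,e_2\}\neq\emptyset$, i.e.\ $f\in D$.
\end{itemize}
These two requirements contradict each other, so $K_1\cup K_2$ is not forcing consistent. By Theorem~\ref{thm:S}, no supervisor achieves it.

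\emph{Intersection.} Here $K_1\cap K_2=\{\eps\}$, which is nonempty, and $P^{-1}(\eps)\cap\overline{K_1\cap K_2}=\{\eps\}$. At $\eps$ we have $\Gamma_K(\eps)=\emptyset\neq\Gamma(\eps)=\{f_{u1},f_{u2}\}$. The non-forcing branch would require $\Gamma(\eps)=\Gamma_K(\eps)$, so it fails. The forcing branch would require $D\cap\Gamma(\eps)$ to be both nonempty and equal to $\Gamma_K(\eps)=\emptyset$, which is impossible. Hence no $E^{cd}_f(\eps)$ exists. This is also intuitive: a forcing-only supervisor cannot stop the plant at $\eps$.

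The only delicate point is making the reading of Definition~\ref{def:FC} explicit. The rest is a finite check, and the main work is simply verifying $\Gamma$ and $\Gamma_K$ from the figure.
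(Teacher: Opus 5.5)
Your proof is correct. The union part matches the paper's argument: on the plant of Fig.~\ref{fig:NotClose}, the string $f_{u2}$ requires $f\in E^{cd}_f(\eps)$, and that choice then wrongly preempts $e_1$ after the indistinguishable string $f_{u1}$.

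For the intersection you use a different counterexample. You reuse the same $K_1,K_2$ and observe that $K_1\cap K_2=\{\eps\}$. The paper instead builds a separate, fully observable plant (Fig.~\ref{fig:intersection}), in which $K_1$ and $K_2$ each discard one of the two forcible continuations after $e$, so their intersection discards both. The mechanism is the same in both cases. There is a string of $\overline{K}$ at which $G$ has active events but $K$ keeps none, and no forcing decision can realize that. In both cases the intersection already fails forcibility in the sense of Definition~\ref{def:forciblity}.

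Your version is more economical, since one plant serves both items. The paper's fully observable plant makes its later remark immediate: the failure has nothing to do with partial observation, so forcibility itself is not closed under intersection. Your example supports that remark equally well. Definition~\ref{def:forciblity} does not involve $P$, and $K_1,K_2$ are forcible by Proposition~\ref{prop:FC-F}.

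Your point that Definition~\ref{def:FC}, read literally as a disjunction of two implications, is a tautology is also correct. The conjunctive reading you adopt is the one the paper's own proofs of Proposition~\ref{prop:FC-F} and Theorem~\ref{thm:S} tacitly use.
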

\begin{proof}
1. Consider plant $G$, $K_1$, and $K_2$ in Fig.~\ref{fig:NotClose}. According to Example~\ref{eg:FC}, $K_1$ and $K_2$ are both forcing consistent. However, $K_1\cup K_2$ is \emph{not} forcing consistent: To only preempt event $e_2$ after $s_2=f_{u2}$, event $f$ has to be forced when $\eps$ is observed, i.e., $E^\text{cd}_f(\eps)=\{f\}$. Then, for $s_1=f_{u1}\in P^{-1}(\eps)\cap (\overline{K_1\cup K_2})$, $\Gamma(s_1)\cap E^\text{cd}_f(\eps)\neq \emptyset$. However, $\Gamma_{K_1\cup K_2}(s_1)=\{f,e_1\}\neq E^\text{cd}_f(\eps)\cap\Gamma(s_1)=\{f\}$, i.e., condition~\eqref{eq:FC-1} is not satisfied.

\begin{figure}
    \centering
    \includegraphics[width=0.15\textwidth]{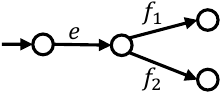}
    \caption{Plant $G$, where $E=E_o=\{e,f_1,f_2\}$ and $E_f=\{f_1,f_2\}$.}\label{fig:intersection}
\end{figure}

2. Consider plant $G$ in Fig.~\ref{fig:intersection}. Let $K_1=L(G)\setminus\{ef_2\}$ and $K_2=L(G)\setminus\{ef_1\}$. According to Definition~\ref{def:FC}, both $K_1$ and $K_2$ are forcing consistent w.r.t $G$, $E_f$, and $P$:
\begin{itemize}
    \item For $K_1$, $\forall w\in P(\overline{K_1})$, sets $E_f^{\text{cd}}(w)\subseteq E_f$ satisfying conditions~\eqref{eq:FC-1} or \eqref{eq:FC-2} exist: 
    $$E_f^{\text{cd}}(w)=\left\{\begin{array}{ll}
    \{f_1\}, & \text{ if } w=e;\\
    \emptyset, &\text{ otherwise}.
    \end{array}\right.$$
    \item For $K_2$, $\forall w\in P(\overline{K_2})$, sets $E_f^{\text{cd}}(w)\subseteq E_f$ satisfying conditions~\eqref{eq:FC-1} or \eqref{eq:FC-2} exist:
    $$E_f^{\text{cd}}(w)=\left\{\begin{array}{ll}
    \{f_2\}, & \text{ if } w=e;\\
    \emptyset, &\text{ otherwise}.
    \end{array}\right.$$
\end{itemize}
However, their intersection $K=K_1\cap K_2=L(G)\setminus\{ef_1,ef_2\}$ is not forcing consistent because, for $w=e$ no subset of $E_f$ satisfies either condition~\eqref{eq:FC-1} or \eqref{eq:FC-2}.  
\end{proof}

This proposition highlights a significant departure from classical results. Unlike observability, which is closed under intersection but not under union, forcing consistency is closed under neither union nor intersection. This result may seem counter-intuitive at first glance. It is important to note that the failure of closure under intersection holds even when all events are observable: in the example used to prove the second part of the theorem, all events are observable. This leads to another result that \emph{forcibility is not closed under intersection}. In other words, even with perfect information, the intersection of two forcing decisions may lead to a violation of the consistency preemption. Consequently, a unique supremal or infimal forcing consistent sublanguage of $L(G)$ generally does not exist. By Theorem~\ref{thm:S}, this implies that for a given specification $K$, there generally exist multiple, mutually incomparable maximally permissive forcing consistent supervisors.

\begin{example}\label{eg:notclose}[Continuing Examples 3.6 and 3.8]
    Consider again the plant in Fig.~\ref{fig:NotClose}(a) and $K=L(G)\setminus\{f_{u2}e_2\}$. As demonstrated in Example~\ref{eg:test}, $K$ is not forcing consistent. Nevertheless, $K_1$ and $K_2$ from Example~\ref{eg:FC} are two maximal forcing consistent sublanguages of $K$. Accordingly, the two supervisors presented in Example~\ref{eg:theorem} are two mutually incomparable maximally permissive supervisors for $K$. \hfill $\diamond$
\end{example} 

\section{Conclusions and Future Work}\label{sec:conclusion}
In this paper, we investigated the supervisory control of discrete-event systems with event forcing under partial observation. We demonstrated that the classical notion of forcibility is insufficient in this context because the supervisor's forcing actions can have unintended preemptive effects on indistinguishable strings. To address this, we proposed the notion of \emph{forcing consistency} and proved it to be the necessary and sufficient condition for achieving a given specification. Notably, we found that forcing consistency is not closed under union or under intersection, implying that a unique supremal or infimal forcing-consistent sublanguage does not generally exist.

Future work will focus on two directions. First, we aim to develop efficient algorithms for computing maximal forcing-consistent sublanguages, providing a practical tool for supervisor synthesis. Second, we will investigate a stronger property than forcing consistency that is closed under union, which would allow for the identification of a unique optimal controlled behavior. Finally, traditional controllability and observability will be included.

\bibliographystyle{unsrt}
\bibliography{EventForcing}

\end{document}